\newtheorem{theorem}{Theorem}
\newtheorem{remark}{Remark}
\definecolor{orange}{RGB}{255,107,0}
\begin{document}
\title{Unsupervised Learning for
Pilot-free  Transmission in 3GPP
MIMO Systems}
	\author{\IEEEauthorblockN{Omar M. Sleem$^{\dagger *}$, Mohamed Salah Ibrahim$^*$, Akshay Malhotra$^*$, Mihaela Beluri$^*$, and Philip Pietraski$^*$}
		\IEEEauthorblockA{$^{\dagger}$ Department of Electrical and Computer Engineering, Pennsylvania State University \\ 
$^*$ InterDigital Communications\\
			                  Email: \{\tt oms@psu.edu, firstname.lastname@interdigital.com\}
		                 }
	           }
\maketitle
\begin{abstract}
Reference signals overhead reduction has recently evolved as an effective solution for improving the system spectral efficiency. This paper introduces a new downlink data structure that is free from demodulation reference signals (DM-RS), and hence does not require any channel estimation at the receiver. The new proposed data transmission structure involves a simple repetition step of part of the user data across the different sub-bands. Exploiting the repetition structure at the user side, it is shown that reliable recovery is possible via  canonical correlation analysis. This paper also proposes two effective mechanisms for boosting the CCA performance in OFDM systems; one for repetition pattern selection and another to deal with the severe frequency selectivity issues. The proposed approach exhibits favorable complexity-performance tradeoff, rendering it appealing for practical implementation. Numerical results, using a 3GPP link-level testbench, demonstrate the superiority of the proposed approach relative to the state-of-the-art methods.  
\end{abstract}

\section{Introduction}
The ever-growing number of consumer wireless devices, together with the newly emerged data-hungry applications, have introduced unprecedented spectrum usage challenges \cite{saad2019vision}. Even with cutting-edge technologies, as massive MIMO \cite{lu2014overview} and mmWave communications \cite{hong2021role}, currently in place, there are still several challenges for further improvements in the end-user data rates, cost efficiency, and communication latency. These challenges raised the need for developing new techniques for enhancing the system performance.    

Reference signals (RS) overhead reduction approaches have recently gained significant interest as means for improving the system performance \cite{ahmed2019overhead}. RS are located in some specific resource elements (REs) in the time-frequency grid. The 5G NR specifications define several types of RS that are transmitted in different ways and intended to be used for different purposes \cite{3gpp38211}, \cite{3gpp38213}, \cite{3gpp38214}. For instance, CSI-RS are downlink RS dedicated for downlink channel estimation, while demodulation reference signals (DM-RS) are intended for estimating the effective channel (precoded channel) as part of coherent demodulation at the receiver. 

Decoding the signal of interest at the receivers generally requires a two-step solution; first estimating the channel followed by designing an equalizer as a function of the estimated channel. In cellular systems, RS-based channel estimation may require a large number of RS to ensure acceptable estimation performance; a requirement that hurts the data transmission rate since more REs are reserved for RS as opposed to data. The situation is further complicated in the multiuser MIMO setup where the RS need to be orthogonal across the different UEs, which is a difficult proposition on its own, as it requires tight coordination between the different base stations. Once the channel is estimated, an equalizer is designed to recover signal of interest. Several equalization approaches with various levels of complexity  can be used to decode the signal of interest \cite{madhow1994mmse,wiesel2005semidefinite}. However, such approaches require accurate channel estimate to obtain acceptable channel estimation performance.

This paper seeks to address the aforementioned limitations of the pilot-based approaches via proposing a a new DM-RS free data structure that employs a simple repetition protocol on a portion of the user data in the time-frequency grid. The repetition structure will then be utilized at the receiver side to create two data views -- assuming that the repetition pattern is known at the receiver and is different across layers. Applying canonical correlation analysis (CCA) \cite{hotelling1936relations} on the two constructed data views, for each layer, will recover the desired signal at the UE. We introduce a strategy for choosing the repetition type that relies on the channel parameters. In addition, we propose a solution to boost the CCA performance in scenarios where frequency selectivity is severe. Finally, we showed through simulations that the proposed method outperforms the state of the art method, while having lower complexity.

CCA is a statistical learning tool that has a wide variety of applications in signal processing and wireless communications. This includes, but not limited to, direction-of-arrival estimation~\cite{wu1994music}, equalization \cite{dogandzic2002finite}, spectrum sharing \cite{SalahUnderlay22}, blind source separation~\cite{bertrand2015distributed},  cell-edge user detection~\cite{salahtwc}, to list a few.

\section{System Model and Problem Statement}\label{Sec: probdesc}
Consider a downlink (DL) transmission in a 5G NR network where a single base station (gNB) sends data to a single user equipment (UE) through a physical DL shared channel (PDSCH). The gNB is equipped with $N_{\mathrm{t}}$ antennas and serves a UE of $N_{\mathrm{r}}$ antennas. For sub-band $n$, the gNB transmits $L$ data streams; each of length $N_{\mathrm{data}}$ and represented by the matrix $\mathbf{X}^{(n)}\in \mathbb{C}^{N_{\mathrm{data}}\times L}$, {for $n \in [N] := \{1,\cdots,N\}$}, and $N$ is the number of sub-bands. The received base-band signal for sub-band $n$ can then be described as follows,
\begin{equation}
    {\bf Y}^{(n)} = {\bf H}^{(n)}\sum\limits_{\ell = 1}^{L} \sqrt{\alpha_\ell}{\bf f}^{(n)}_\ell{\bf x}^{\top(n)}_\ell + {\bf W}^{(n)}
\end{equation}
where $\mathbf{Y}^{(n)}\in\mathbb{C}^{N_{\mathrm{r}}\times N_{\mathrm{data}}}$ is the $N_r$-dimensional received signal at the UE, $\mathbf{H}^{(n)}\in\mathbb{C}^{N_{\mathrm{r}}\times N_{\mathrm{t}}}$ is the DL channel response matrix associated with the $n$-th sub-band, and $\mathbf{W}^{(n)}\in\mathbb{C}^{N_{\mathrm{r}}\times N_{\mathrm{data}}}$ contains independent and identically distributed (i.i.d) complex Gaussian entries of zero mean and variance $\sigma^{2}$ each. The term $\alpha_\ell$ represents the power allocated to the $\ell$-th layer, where unless stated otherwise, $\alpha_\ell$ is chosen to be $1/\sqrt{L}$, $\forall \ell$. In order to support multi-stream transmission, the gNB precodes the $\ell$-th stream data symbols  $\mathbf{x}_{\ell}^{(n)} \in \mathbb{C}^{N_{\mathrm{data}}}$ using the precoder $\mathbf{f}_\ell^{(n)}\in\mathbb{C}^{N_{\mathrm{t}}}$, {where $\|{\bf f}_\ell\|_2^2 = 1$}.  For the SU-MIMO case considered herein, we assume SVD based precoding where the precoder ${\bf f}^{(n)}$ is the $\ell$-th dominant right singular vector of the matrix ${\bf{H}}^{(n)}$.  

Throughout this work, the DL effective channel matrix $\bar{\mathbf{h}}_{\ell}^{(n)}\stackrel{\Delta}{=}\mathbf{H}^{(n)}\mathbf{f}_\ell^{(n)}$ is assumed to be unknown at the UE. We assume a wide band (WB) precoding of the entire resource grid (RG). This yields a single precoder for each layer, $\mathbf{f}^{(n)}_\ell = \mathbf{f}_\ell, \forall n \in [N]$, that is chosen to be the $\ell$-th dominant right singular vector of the average of all channel matrices across the RG, and hence, we suppress the superscript dependence of $\mathbf{f}_\ell^{(n)}$. 

\begin{figure}[t]
\begin{center}
\includegraphics[scale=.35]{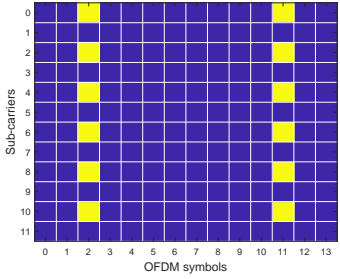}  
\caption{Traditional wideband data channel structure with data REs colored in blue while DM-RS REs are colored in yellow.} \label{DM-RS}
\end{center}
\end{figure}


In 3GPP 5G NR, data transmission is accompanied with a set of reference signals, such as  DM-RS. DM-RS for PDSCH are intended for the estimation of $\bar{\mathbf{h}}_{\ell}^{(n)}$ as part of coherent demodulation. DM-RS based channel estimation and equalization is a computationally complex process. The channel estimation accuracy depends on the DM-RS density; increasing the DM-RS density may improve the channel estimation accuracy, but it could potentially hurt the data rate as fewer REs are then used for data transmission. Moreover, for the multi-user (MU) transmission case, the DM-RS sequences associated with the co-scheduled UEs need to be orthogonal -- a constraint that is difficult to achieve in the multi-cell networks. We will next show how to overcome the DM-RS short-comings mentioned by designing a machine leaning (ML) based detection method that tends to recover the transmitted data symbols in an unsupervised manner, and at much lower complexity relative to the DM-RS approach.

\begin{figure}[t]
     \centering
     \begin{subfigure}[b]{0.24\textwidth}
         \centering
         \includegraphics[width=1\textwidth]{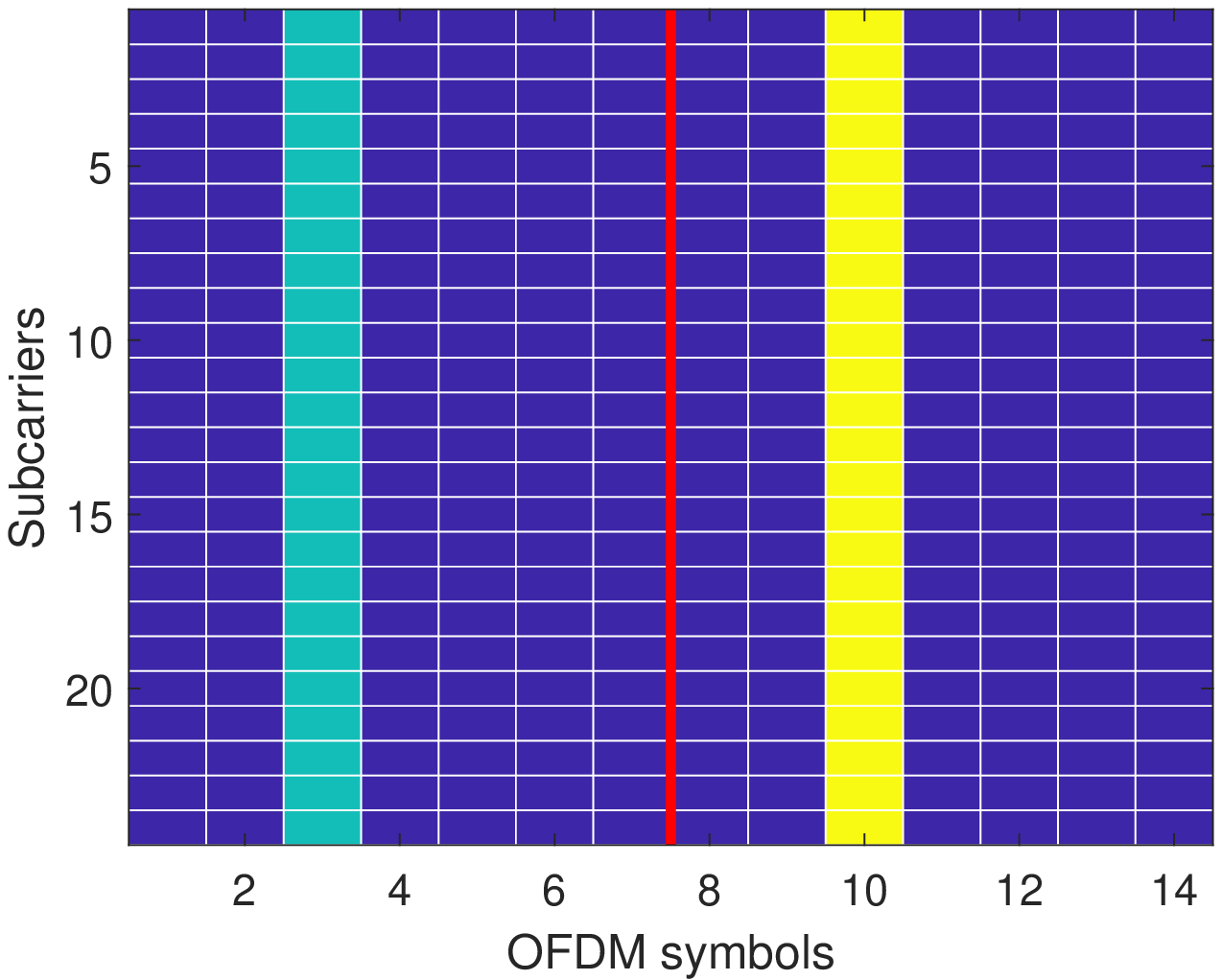}
         \caption{Pattern 1.}
         \label{fig:CCApatt1}
     \end{subfigure}
     \begin{subfigure}[b]{0.24\textwidth}
         \centering
         \includegraphics[width=1\textwidth]{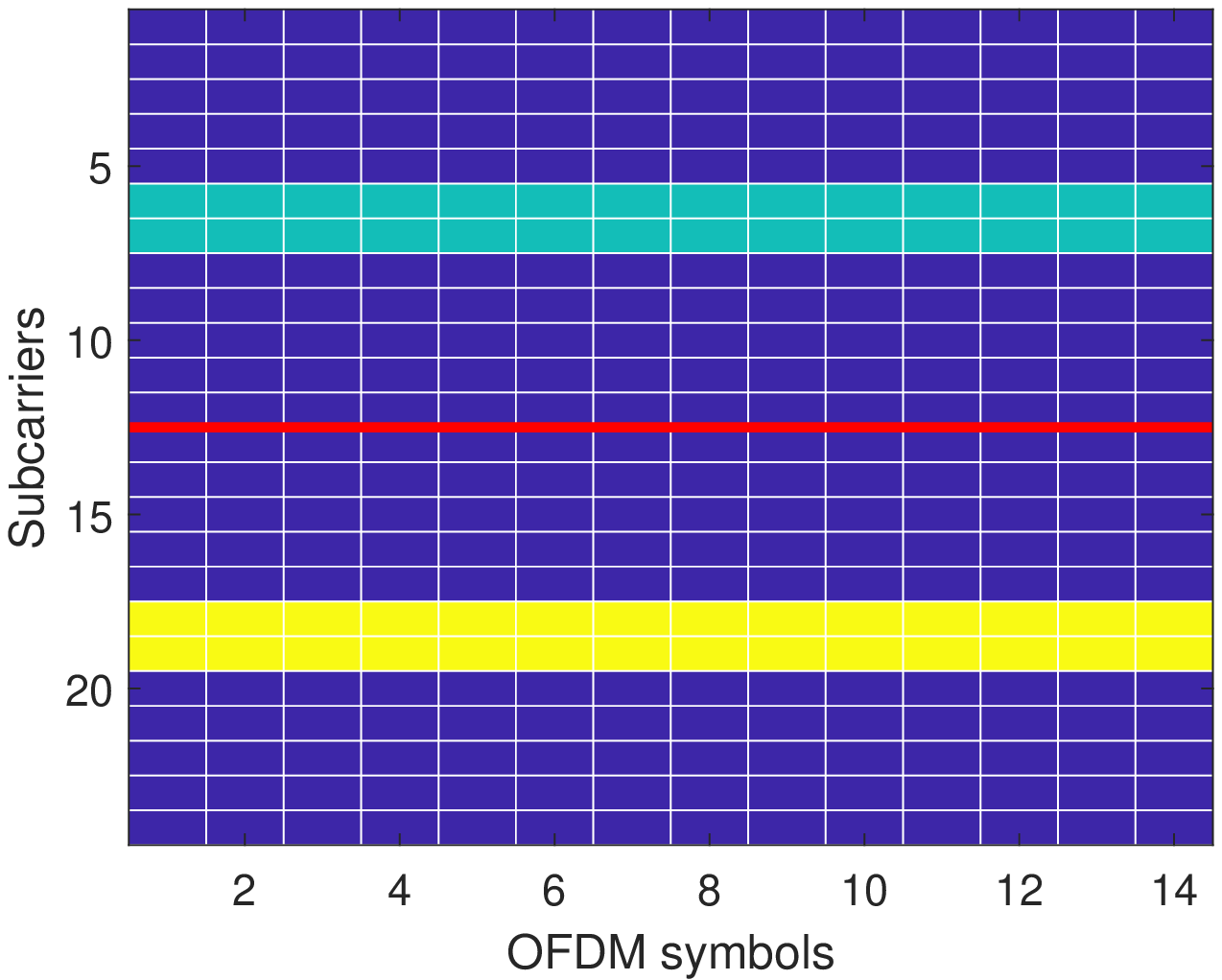}
         \caption{Pattern 2.}
         \label{fig:CCApatt3}
     \end{subfigure}
        \caption{Proposed data structure with data REs colored in blue and light blue and reserved REs colored in yellow.}
        \label{fig:CCApatts}
\end{figure}

\section{Proposed Method}
In this section, we explain how CCA can be exploited in order to solve the problem described in Section \ref{Sec: probdesc}. Instead of using the frame structure in Fig. \ref{DM-RS} that contains reserved REs for DM-RS symbols (colored in yellow), we propose a new DM-RS-free data transmission structure. The new  structure requires repeating a few data symbols in the time-frequency grid, as depicted in Fig. 2. The repetition may be employed either in time or frequency, where the different repetition patterns can yield different performance, as will be explained in Section \ref{Sec:CCAPatterns}. The repetition structure will then be utilized at the UE side to derive the combiners that will be used to decode the PDSCH data in the repeated locations and in the  neighbourhood of the repeated symbols, as will be shown later. To explain how the proposed method works, let ${\bf x}_{\mathrm{c}\ell} \in \mathbb{C}^{\Bar{N}}$ denote the common/repeated signal associated with the $\ell$-th layer within a part of the sub-band, where $\Bar{N}$ represents the length of the common signal and $\Bar{N} << N_{\mathrm{data}}$. 
Towards this end, the baseband equivalent model of the received signal at each of those two separate regions using the repetition pattern of the $\ell$-th layer, can be expressed as,
\begin{equation} \label{CCAViews} 
  {\bf Y}^{(i)}_{\ell} =  \bar{\bf h}_\ell^{(i)}{\bf x}^{\top}_{\mathrm{c}\ell} + \sum\limits_{j \neq \ell}^{L}  \bar{\bf h}_j^{(i)}{\bf x}^{\top}_{ij}  +\overline{\mathbf{W}}^{(i)}_{\ell} 
\end{equation}
where $i=1,2$ refers to the region/view index, and ${\bf x}_{ij}$ is the signal associated with the received signal of the $j$-th layer in the $i$-th region. The power allocation terms are absorbed in the respective channel vectors. It should be noted that, under the assumption that the repetition patterns are unique, the two views in \eqref{CCAViews} share one common signal associated with the layer which repetition pattern is used -- all the other layers will be randomly permuted, and hence, having different signals in both views. We will now show how CCA can be utilized at the UE to recover the desired signal. 

CCA is a widely-used machine learning tool that seeks to find two linear combinations of two given random vectors such that the resulting pair of random variables are maximally correlated. In a recent work \cite{salahtwc}, the authors came up with a new algebraic interpretation of CCA as a method that can identify a common subspace between two given data views, under a linear generative model. It was shown that, if two signals views have a common/shared component in addition to individual components across each view, then applying CCA to those two views will recover the common component up to a global complex scaling ambiguity, where the scaling ambiguity can be resolved via correlation a few pilots symbols with the corresponding received one. It can be easily seen that such an interpretation is applicable to the model in \eqref{CCAViews}, where the two data views in \eqref{CCAViews} only share the signal of the layer which repetition pattern is used. Considering the following MAXVAR formulation of CCA \cite{hardoon2004canonical},
\begin{subequations}\label{MAXVAR}
	\begin{align}
	&\underset{{\bf g}_{\ell},{\bf q}_{1\ell},{\bf q}_{2 \ell}}{\min}~  \sum_{j = 1}^{2}\| {\bf Y}^{(i)\top}{\bf q}_{i\ell} - {\bf g}_{\ell}\|^2_2,\\
	& \text{s.t.} \quad \|{\bf g}_{\ell}\|^2_2 = 1.
	\end{align}
\end{subequations}
The optimal canonical vectors (referred to as combiners in the considered problem), ${\bf q}^\star_{1\ell}$ and ${\bf q}^\star_{2 \ell}$, can be obtained through solving an eigen-decomposition problem. 
To clarify how the results of \cite{salahtwc} can be mapped to the considered problem herein, let us define the matrix ${\bf X}_{i\ell} := [{\bf x}_{\mathrm{c}\ell}, {\bf X}_{i}] \in \mathbb{C}^{\Bar{N} \times L}$,  where the columns of  ${\bf X}_{i} \in \mathbb{C}^{\Bar{N} \times (L-1)}$ represent the signal associated with the other/interfering $L-1$ layers. Similarly, let the matrix $\overline{\bf H}_i  \in \mathbb{C}^{N_r \times L}$ hold the corresponding effective channel vectors. Towards this end, we have the following result.
\begin{theorem}
		In the noiseless case, if the matrices ${\bf X}_{i\ell}$  and $\overline{\bf H}_i$, for $i=1,2$ and $\ell \in [L]\stackrel{\Delta}{=}\{1,\dots L\}$, are full column rank, then the optimal solution ${\bf g}_\ell^\star$ of \eqref{MAXVAR} is given as ${\bf g}_\ell^\star = \gamma_\ell {\bf x}_{\mathrm{c}\ell}$, where $\gamma_\ell \in \mathbb{C}$ is a complex phase ambiguity.
\end{theorem}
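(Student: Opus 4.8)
The plan is to reduce the MAXVAR problem \eqref{MAXVAR} to a question about the intersection of two column subspaces. First I would write the two views in the noiseless compact form ${\bf Y}^{(i)} = \overline{\bf H}_i {\bf X}_{i\ell}^\top$ for $i=1,2$, which follows directly from \eqref{CCAViews} once the power terms are absorbed into the channels and the layer signals are collected into ${\bf X}_{i\ell} = [{\bf x}_{\mathrm{c}\ell},{\bf X}_i]$. Substituting this into the inner quantity gives ${\bf Y}^{(i)\top}{\bf q}_{i\ell} = {\bf X}_{i\ell}\,\overline{\bf H}_i^\top {\bf q}_{i\ell}$. Since $\overline{\bf H}_i$ is full column rank, the map ${\bf q}_{i\ell}\mapsto \overline{\bf H}_i^\top {\bf q}_{i\ell}$ is onto $\mathbb{C}^L$, so as ${\bf q}_{i\ell}$ ranges over $\mathbb{C}^{N_r}$ the vector ${\bf Y}^{(i)\top}{\bf q}_{i\ell}$ sweeps out exactly the column space $\mathrm{range}({\bf X}_{i\ell})$, an $L$-dimensional subspace of $\mathbb{C}^{\bar N}$ by the full-column-rank assumption on ${\bf X}_{i\ell}$.

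With this reparametrization the objective becomes $\sum_{i=1}^2 \|{\bf X}_{i\ell}{\bf z}_{i\ell} - {\bf g}_\ell\|_2^2$ over free ${\bf z}_{i\ell}\in\mathbb{C}^L$. Each summand can be driven to zero independently if and only if ${\bf g}_\ell$ lies in $\mathrm{range}({\bf X}_{i\ell})$. Because ${\bf x}_{\mathrm{c}\ell}$ is a column of both ${\bf X}_{1\ell}$ and ${\bf X}_{2\ell}$, it belongs to both ranges, so the objective attains its global minimum value of zero and any optimal ${\bf g}_\ell^\star$ (subject to $\|{\bf g}_\ell\|_2=1$) must lie in the intersection $\mathrm{range}({\bf X}_{1\ell})\cap\mathrm{range}({\bf X}_{2\ell})$. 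The theorem is therefore equivalent to showing this intersection equals $\mathrm{span}({\bf x}_{\mathrm{c}\ell})$.

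The crux, and the step I expect to be the main obstacle, is pinning down the dimension of that intersection. I would count dimensions via $\dim(\mathrm{range}({\bf X}_{1\ell})+\mathrm{range}({\bf X}_{2\ell})) = 2L - \dim(\mathrm{range}({\bf X}_{1\ell})\cap\mathrm{range}({\bf X}_{2\ell}))$. The sum space is the column space of $[{\bf x}_{\mathrm{c}\ell},{\bf X}_1,{\bf X}_2]$, the shared column being counted once, which has $2L-1$ columns. The key fact needed is that these $2L-1$ columns are linearly independent, i.e. the only vector common to the two ranges is a multiple of ${\bf x}_{\mathrm{c}\ell}$; this is exactly where the distinctness of the interfering signals across the two views, guaranteed by the unique repetition patterns that randomly permute the non-repeated layers, must be invoked, going beyond the bare full-column-rank hypotheses. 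Granting it, the sum space has dimension $2L-1$, forcing the intersection to be $2L-(2L-1)=1$ dimensional; since ${\bf x}_{\mathrm{c}\ell}$ is a nonzero member of it, the intersection is precisely $\mathrm{span}({\bf x}_{\mathrm{c}\ell})$.

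Finally I would conclude ${\bf g}_\ell^\star = \gamma_\ell {\bf x}_{\mathrm{c}\ell}$ for some scalar $\gamma_\ell\in\mathbb{C}$, with the unit-norm constraint fixing $|\gamma_\ell| = 1/\|{\bf x}_{\mathrm{c}\ell}\|_2$ and leaving only the phase undetermined. This mirrors the algebraic CCA interpretation of \cite{salahtwc}, and an alternative to the hands-on dimension count above would be to verify that the model ${\bf Y}^{(i)}=\overline{\bf H}_i {\bf X}_{i\ell}^\top$ satisfies the linear generative assumptions of that reference and then invoke its common-subspace recovery result directly.
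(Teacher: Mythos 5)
Your argument is sound and, unlike the paper, actually self-contained: the paper's entire proof is a one-line citation of Theorem~1 in \cite{salahtwc}, so your subspace-intersection derivation is essentially a reconstruction of what that reference must establish rather than a different route from anything written here. The reduction ${\bf Y}^{(i)\top}{\bf q}_{i\ell}={\bf X}_{i\ell}\overline{\bf H}_i^\top{\bf q}_{i\ell}$, the observation that surjectivity of $\overline{\bf H}_i^\top$ (from full column rank of $\overline{\bf H}_i$) makes the reachable set exactly $\mathrm{range}({\bf X}_{i\ell})$, and the dimension count $\dim(\mathrm{range}({\bf X}_{1\ell})\cap\mathrm{range}({\bf X}_{2\ell}))=2L-\dim(\mathrm{range}({\bf X}_{1\ell})+\mathrm{range}({\bf X}_{2\ell}))$ are all correct. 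Most valuably, you have put your finger on a real imprecision in the theorem as stated: full column rank of each ${\bf X}_{i\ell}$ separately does \emph{not} imply that $[{\bf x}_{\mathrm{c}\ell},{\bf X}_1,{\bf X}_2]$ has rank $2L-1$ (take ${\bf X}_1={\bf X}_2$ as a counterexample, in which case the intersection is $L$-dimensional and the optimum is not unique up to phase). The missing hypothesis --- that the only component shared across the two views is ${\bf x}_{\mathrm{c}\ell}$, guaranteed in the system by the distinct random repetition patterns --- is precisely the identifiability condition carried by Theorem~1 of \cite{salahtwc}, and your proof makes explicit an assumption the paper leaves implicit in its citation. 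The only cosmetic quibble is that $\gamma_\ell$ is not a free complex scalar: the unit-norm constraint pins $|\gamma_\ell|=1/\|{\bf x}_{\mathrm{c}\ell}\|_2$ and only the phase is ambiguous, which you note correctly.
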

\begin{proof}
	The proof follows from Theorem 1 in~\cite{salahtwc}.
\end{proof}
\begin{remark}
 The full rank condition on the matrix ${\bf X}_{i\ell}$ requires the CCA view length $\Bar{N}$ to be greater than or equal to the number of layers $L$ and the columns of ${\bf X}_{i\ell}$ to be linearly independent. Since the transmissions associated with the different layers are independent, then such requirements can be easily satisfied with modest $\Bar{N}$. On the other hand, the full rank condition on the effective channel matrix $\overline{\bf H}_i$ can be satisfied if $i)$ the number of receive antennas is greater than or equal to the number of layers and $ii)$ the columns of $\overline{\bf H}_i$ are linearly independent. Both conditions are satisfied with probability one since $i)$ the number of transmitted streams $L$ is always upper-bounded by $N_r$ and $ii)$ the columns of $\overline{\bf H}_i$ are in fact orthogonal (because of the SVD-based precoding). 
 \end{remark}
 
\subsection{Pattern Design}\label{Sec:CCAPatterns}
For the $\ell$-th layer, we consider a RG of $N_{\mathrm{RB}}$ resource blocks (RB)s. Each RB is composed of REs that are divided among 12 sub-carriers vertically and 14 OFDM symbols horizontally. $N_{\mathrm{data}}$ REs in the RG are loaded with data symbols from the transmit vector $\mathbf{x}_{\ell{}}$, such that each symbol corresponds to a RE, while $N_{\mathrm{res}}$ REs are reserved for control signals. 

REs in the $\ell$-th layer RG are located using matrix linear indexing. Assuming that the (sub-carrier, slot) location of an RE in the RG is $(m,n)$ and both $m$ and $n$ start from 0, the corresponding index can be found as $i_{m,n}=12 n  N_{\mathrm{RB}}+m+1$. We define $\mathcal{I}_{\mathrm{RE}}$ as the set of all indices in a RG, where for $\ell \in [L]$, $\mathcal{I}_{\mathrm{RE}}=\mathcal{I}_{\mathrm{data}}^{\ell}\cup\mathcal{I}_{\mathrm{res}}^{\ell}$, such that $\mathcal{I}_{\mathrm{data}}^{\ell}$ and $\mathcal{I}_{\mathrm{res}}^{\ell}$ are two disjoint sets that correspond to the indices of data and reserved REs for layer $\ell$,  respectively. 

Let the indices set $\mathcal{I}_{\mathrm{s}}^{\ell}\subset \mathcal{I}_{\mathrm{data}}^{\ell}$ such that $|\mathcal{I}_{\mathrm{s}}^{\ell}|=N_{\ell}<|\mathcal{I}_{\mathrm{data}}^{\ell}|=N_{\mathrm{data}}$, where $|.|$ is the cardinality of a set. For ease of notations, we assume equal view length for all layers, i.e., $N_{\ell}=\Bar{N}, \forall \ell$. In this case, the common signal $\mathbf{x}_{\mathrm{c}\ell}\in\mathbb{C}^{\Bar{N}}$, is  the vector of symbols formed by the data in the REs located with the index set $\mathcal{I}_{\mathrm{s}}^{\ell}$. We also let the index set $\mathcal{I}_{\mathrm{d}}^{\ell}\subseteq \mathcal{I}_{\mathrm{res}}^{\ell}$, such that $|\mathcal{I}_{\mathrm{d}}^{\ell}|=|\mathcal{I}_{\mathrm{s}}^{\ell}|=\Bar{N}$. We refer to $\mathcal{I}_{\mathrm{s}}^{\ell}$, $\mathcal{I}_{\mathrm{d}}^{\ell}$ and the tuple $\mathcal{P}^{\ell}\stackrel{\Delta}{=}(\mathcal{I}_{\mathrm{s}}^{\ell}, \mathcal{I}_{\mathrm{d}}^{\ell})$ by source, destination indices sets and a CCA pattern,  respectively. 

\subsection{Data repetition} \label{Datarep}
At time slot $t$, the UE generates the source indices set for the $\ell$-th layer, $\mathcal{I}_{\mathrm{s}}^{\ell}$, such that the symbols of the corresponding CCA signal $\mathbf{x}_{\mathrm{c}\ell}$ lie within the time-bandwidth coherence (TBC) block $\mathfrak{B}_{1}^{\ell}$. Similarly, the destination indices set $\mathcal{I}_{\mathrm{d}}^{\ell}$ is generated such that $|\mathcal{I}_{\mathrm{d}}^{\ell}|=|\mathcal{I}_{\mathrm{s}}^{\ell}|=\Bar{N}$ with REs in the TBC block $\mathfrak{B}_{2}^{\ell}$. The CCA pattern $\mathcal{P}^{\ell}=(\mathcal{I}_{\mathrm{s}}^{\ell},\mathcal{I}_{\mathrm{d}}^{\ell})$ is then signaled to the gNB to be used in the next slot transmission. At time slot $t+1$, the gNB repeats the CCA signal $\mathbf{x}_{\mathrm{c}\ell}$, constructed from the indices in $\mathcal{I}_{\mathrm{s}}^{\ell}$, in the locations defined by $\mathcal{I}_{\mathrm{d}}^{\ell}$. 

The UE constructs the two views $\mathbf{Y}_{\ell}^{(i)}\in\mathbb{C}^{N_{\mathrm{r}}\times \Bar{N}}, i=\{1,2\}$, of the CCA signal $\mathbf{x}_{\mathrm{c}\ell}$ defined in \eqref{CCAViews}.
 Given the two views, $\mathbf{Y}_{\ell}^{(i)}, i=\{1,2\}$, the UE then solves the problem in \eqref{MAXVAR} to find the combiners $\mathbf{q}^{*}_{i\ell}\in\mathbb{C}^{N_{\mathrm{r}}}, i=\{1,2\}$. $\mathbf{q}_{i\ell}^{*}$ is then used to combine TBC block $\mathfrak{B}_{i}^{\ell}$ and the $N_{\mathrm{\mathfrak{B}}_{i}\ell}$ REs in its vicinity such that $N_{\mathrm{\mathfrak{B}}_{1}\ell}+N_{\mathrm{\mathfrak{B}}_{2}\ell}=N_{\mathrm{data}}-\Bar{N}$.


\subsection{Sub-gridding}
As mentioned in the previous section, the combiner $\mathbf{q}_{i\ell}^{*}$ is used to combine the REs in the vicinity region of $\mathfrak{B}_{i}^{\ell}$. However, this only serves as an approximation for the optimal combiners required for those REs. In order to reduce that approximation error, we introduce the RG sub-grids (SG)s solution. The RG is divided into equal sized non-overlapping SGs, each of size $N_{\mathrm{BSG}}$ RBs, where SG $j$ contains its own CCA pattern $\mathcal{P}_{j}^{\ell}=(\mathcal{I}_{s,j}^{\ell},\mathcal{I}_{d,j}^{\ell})$, and signal $\mathbf{x}_{\mathrm{c}\ell,j}$. The pattern $\mathcal{P}_{j}^{\ell}$ is then used to construct the two views, $\mathbf{Y}_{\ell,j}^{(i)}, i\in\{1,2\}$, of the CCA signal $\mathbf{x}_{\mathrm{c}\ell,j}$ within SG $j$. Given the per SG views, the CCA problem is solved to find the optimal combiners $\mathbf{q}_{i\ell,j}^{*}, i\in\{1,2\}$. Combiner $\mathbf{q}_{i\ell,j}^{*}$ is then used to combine the REs in its vicinity in SG $j$.  

We define the set $\mathcal{I}_{s,j}^{\ell,m}$ $(\mathcal{I}_{d,j}^{\ell,m})$ to be the set of source (destination) REs in RB $m$ of SG $j$ such that $\bigcup_{m\in [N_{\mathrm{BSG}}]}\mathcal{I}_{x,j}^{\ell,m}=\mathcal{I}_{x,j}^{\ell}$ and $|\mathcal{I}_{x,j}^{\ell,m}|=N_{j}^{m}$ with $x\in\{s,d\}$. In order to reduce the signaling overhead between the UE and the gNB, we assume that the sets $\mathcal{I}_{s,j}^{\ell,m}$ and $\mathcal{I}_{d,j}^{\ell,m}$ are symmetric across all RBs in the entire RG, 
Hence, for layer $\ell$, the UE signals $\mathcal{P}_{1}^{\ell,1}\stackrel{\Delta}{=}(\mathcal{I}_{s,1}^{\ell,1},\mathcal{I}_{d,1}^{\ell,1})$ to the gNB which casts the information required by the gNB's transmission scheme. 
Due to the symmetry assumption, we drop the RB and SG dependence, i.e, the indices $j$ and $m$, in the per layer RB (PL-RB) pattern  $\mathcal{P}_{j}^{\ell,m}=(\mathcal{I}_{s,j}^{\ell,m},\mathcal{I}_{d,j}^{\ell,m})$ and $N_{j}^{m}$ and refer to them with $\Bar{\Bar{\mathcal{P}}}^{\ell}$ and $\Bar{\Bar{N}}$ respectively.

When a RG is divided into SGs, the size of the TBC blocks per SG, denoted by $\Bar{N}\stackrel{\Delta}{=}\Bar{\Bar{N}}N_{\mathrm{BSG}}$ and referred to as the view length, decreases. Hence, despite the sub-gridding role in decreasing the combining approximation error (through reducing the number of REs equalized with the same combiner), it motivates the importance of investigating the effect of the view length on the performance. In the numerical results section, we provide a detailed study on the effect of the view length $\Bar{N}$ and the SG size $N_{\mathrm{BSG}}$ on the system performance. 

\begin{figure}[t]
\begin{center}
\includegraphics[scale=.2,width=0.9\columnwidth]{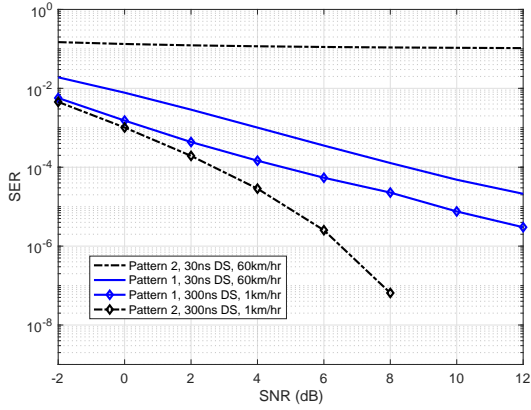}
\caption{SER vs SNR for a RG of 50 RBs, and $N_{\mathrm{BSG}}=2$, CDL-C channel model with SCS=30kHz.} 
\label{fig:pattdes}
\end{center}
\end{figure}
\section{Simulation Results}\label{Simulations}
We consider a DL transmission where a gNB, with $N_{\mathrm{t}}=8$ antennas, transmits QPSK data symbols to a UE, with $N_{\mathrm{r}}=2$ antennas, through a CDL-C channel with 15 kHz sub-carrier spacing (SCS) and a carrier frequency of 4 GHz. Data is transmitted into the form of 10 ms frames, where each frame is composed of 10 slots; each is 1 ms. We average the simulation results over 100 different channel seeds and 100 frames per seed point. 

We  first simulate the effect of different CCA patterns on the average SER. Figure \ref{fig:CCApatts} provides different CCA patterns for SGs of $N_{\mathrm{BSG}}=2$ RBs each. On one side, pattern 1 in \ref{fig:CCApatt1} is concentrated in one symbol per view, spans the entire frequency domain and time repetition is employed. On the other side, pattern 2 in Figure  \ref{fig:CCApatt3} spans the entire time domain, located in only two sub-carriers per view and is repeated in frequency. We simulate the average SER of the 2 patterns for a CDL-C channel with SCS = 30kHz and a RG of 50 RBs that is divided into SGs of size $N_{\mathrm{BSG}}=2$ RBs each. We consider two different channel setups with: 1) CDL-C channel of 30ns delay spread (DS) and 60km/hour UE speed. It can be seen, from figure \ref{fig:pattdes}, that pattern 2 outperforms pattern 1. 2) CDL-C channel of 300ns DS and 1km/hour UE speed. In the second  setup, pattern 2 completely fails to detect the transmitted symbols while pattern 1 performs better. The simulation results from figure \ref{fig:pattdes} motivate that a CCA pattern must be designed such that the symbols within a pattern should experience the same channel, i.e., symbols must be within the same TBC block.

\begin{figure}[t]
\begin{center}
\includegraphics[scale=.2,width=0.9\columnwidth]{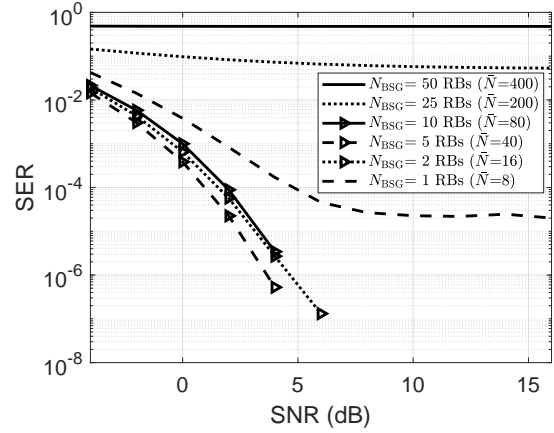}
\caption{SER vs SNR for for a RG of 50 RBs, SCS=30kHz and $N_{\mathrm{p}}=1$ symbols per SG.}  
\label{fig:SGSres1}
\end{center}
\end{figure}
We then study the impact of the SG size $N_{\mathrm{BSG}}$ on the SER performance. We consider a RG of $N_{\mathrm{RB}}=50$ RBs with SCS equal to 30 kHz, delay spread equal to 30 ns, and a PL-RB pattern $\Bar{\Bar{\mathcal{P}}}$ with $\Bar{\Bar{N}}=8$ REs. These choices ensure that the channel's coherence BW is less than the transmission BW leading to a frequency selective channel. In figure \ref{fig:SGSres1}, we simulate the average SER for different values of $N_{\mathrm{BSG}}$. It can be realized that when the SG size is large, $N_{\mathrm{BSG}}\in\{25,50\}$, CCA transmission provides a degraded average SER. This is because, due to the frequency selectivity nature of the channel, the symbols within a CCA signal experience different channel responses which violates the requirement that they should be constructed, and repeated, within a TBC block. Upon decreasing the SG size, $N_{\mathrm{BSG}}\in\{10,5,2\}$, the SER decreases as the channel affecting the CCA signal within each SG tends to be flatter. For a very small SG size,  $N_{\mathrm{BSG}}=1$, the average SER degrades again because the view length, $\Bar{N}$, per SG tends to be significantly small.  

\begin{figure}[t]
\begin{center}
\includegraphics[scale=.2,width=0.9\columnwidth]{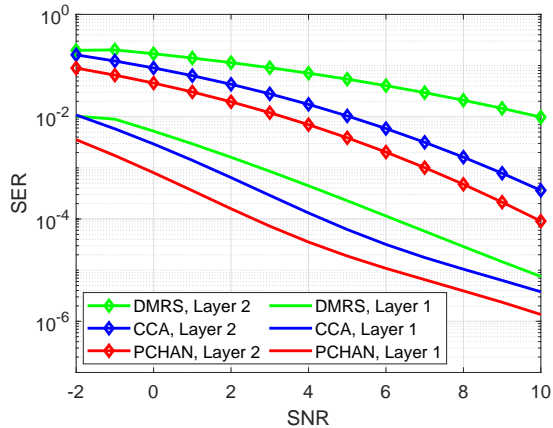}
\caption{SER vs SNR for a RG of 52 RBs, DS=30 ns \& UE speed=5km/hour, $N_{\mathrm{BSG}}=4$ and an $8\times4$ antennas system.}  
\label{fig:MLTX}
\end{center}
\end{figure}
Next we consider a DL transmission scenario where a gNB transmits two data layers, i.e., $L=2$, to the UE. A CCA pattern with $\Bar{\Bar{N}}=8$ REs is used in the first layer. In order to ensure that the CCA recovers one common signal at each layer, we use a shifted version of the first layer's pattern in the second layer's transmission; this way the patterns do not overlap. In figure \ref{fig:MLTX}, we present the average SER for each of the layers separately and compare the CCA equalization performance with a DM-RS pattern of mapping type 1, configuration type 2, length 1 and 1 additional position as defined in \cite{3gpp38211,3gpp38212,3gpp38213,3gpp38214}. Perfect channel knowledge (PCHAN) results, are incorporated as a lower bound on the performance. It can be seen that the considered CCA pattern outperforms the DM-RS one in the simulated channel conditions. Moreover, the gap between CCA and DM-RS in layer 1 is less than that of layer 2. This is because the interference power on layer 2 is larger than that of layer 1, assuming equal power allocation across layers, which leads to a degraded SER for DM-RS as an inaccurate channel estimate is used for equalization. However, CCA is not affected as it tends to recover the common signal regardless of the interference power.

To show how robust CCA is to strong interference, we now treat the second layer as an unknown interference. This can be used to model a multi-user (MU) MIMO network with two gNBs; each transmits data to an assigned UE. For either of the UEs in that network, the interference is then due to the transmission of the secondary gNB to its intended UE.  We set the first layer power to $\alpha_1$, while the second layer power is set to $1-\alpha_1$. The signal to interference ratio (SIR) then can easily found to be $\text{SIR}=\frac{\alpha_1}{1-\alpha_1}$. We vary the SIR value in our simulation while satisfying a total power budget constraint. In order to ensure that the SNR remains fixed for the first layer, the noise power is scaled with a factor of $\alpha_1$. In figure \ref{fig:interference}, we simulate the average SER vs SIR for a CCA pattern with $\Bar{\Bar{N}}=8$ REs and compare it with the same DM-RS pattern in the previous experiment. On one side, CCA equalization is not affected by the low SIR values. This is because CCA aims to recover the common signal regardless of the interference power. However, on the other side, DM-RS is highly influenced by the interference power. It fails to calculate an accurate effective channel estimate to be used further for equalization and therefore, it provides a high SER in the interference limited (low SIR) range. On increasing SIR, the channel switches to being a noise limited one where the performance converges to the average SER achieved by the simulated noise range, i.e., convergence is to zero for noiseless case and $2\times 10^{-3}$ average SER for 2 dB noise in the DM-RS case.

\begin{figure}[t]
\begin{center}
\includegraphics[scale=.2,width=0.9\columnwidth]{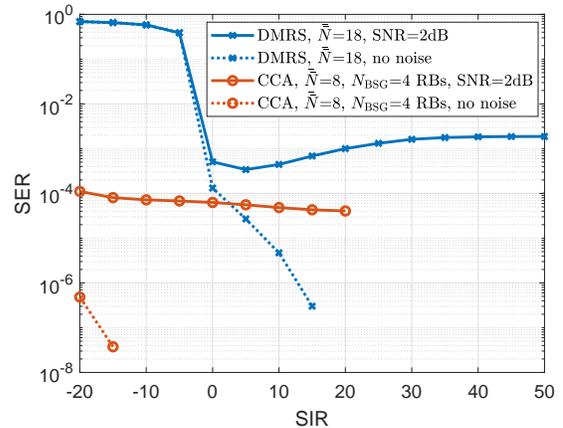}
\caption{SER vs SIR for a RG of 52 RBs, DS=30 ns, UE speed=5km/hour.}
\label{fig:interference}
\end{center}
\end{figure}

\section{Conclusions} \label{Conc} 
We considered the problem of RS overhead reduction through proposing a new data structure in a downlink single user MIMO system. The new data structure requires repetition of a portion of the PDSCH data across all sub-band. Utilizing the repetition structure at the UE side, we showed that reliable detection of the desired signal is possible via CCA. The proposed solution is shown to provide both performance and computational gains, rendering it practically feasible. To further boost the CCA performance, we presented two effective strategies to select the repetition pattern type and to deal with the high frequency selective channel scenarios. We showed through numerical results that both strategies are effective in improving the CCA performance. We demonstrated through extensive simulations on a 3GPP link-level testbench that the proposed approach considerably outperforms the state-of-the-art methods, while having a lower computational complexity.  

\bibliographystyle{IEEEtran}
\bibliography{IEEEabrv,Ref}

\begin{thebibliography}{10}
\providecommand{\url}[1]{#1}
\csname url@samestyle\endcsname
\providecommand{\newblock}{\relax}
\providecommand{\bibinfo}[2]{#2}
\providecommand{\BIBentrySTDinterwordspacing}{\spaceskip=0pt\relax}
\providecommand{\BIBentryALTinterwordstretchfactor}{4}
\providecommand{\BIBentryALTinterwordspacing}{\spaceskip=\fontdimen2\font plus
\BIBentryALTinterwordstretchfactor\fontdimen3\font minus
  \fontdimen4\font\relax}
\providecommand{\BIBforeignlanguage}[2]{{%
\expandafter\ifx\csname l@#1\endcsname\relax
\typeout{** WARNING: IEEEtran.bst: No hyphenation pattern has been}%
\typeout{** loaded for the language `#1'. Using the pattern for}%
\typeout{** the default language instead.}%
\else
\language=\csname l@#1\endcsname
\fi
#2}}
\providecommand{\BIBdecl}{\relax}
\BIBdecl

\bibitem{saad2019vision}
W.~Saad, M.~Bennis, and M.~Chen, ``A vision of {6G} wireless systems:
  Applications, trends, technologies, and open research problems,'' \emph{IEEE
  network}, vol.~34, no.~3, pp. 134--142, Oct. 2019.

\bibitem{lu2014overview}
L.~Lu, G.~Y. Li, A.~L. Swindlehurst, A.~Ashikhmin, and R.~Zhang, ``An overview
  of massive {MIMO}: Benefits and challenges,'' \emph{IEEE journal of Selected
  Topics in Signal Processing}, vol.~8, no.~5, pp. 742--758, 2014.

\bibitem{hong2021role}
W.~Hong, Z.~H. Jiang, C.~Yu, D.~Hou, H.~Wang, C.~Guo, Y.~Hu, L.~Kuai, Y.~Yu,
  Z.~Jiang \emph{et~al.}, ``The role of millimeter-wave technologies in {5G/6G}
  wireless communications,'' \emph{IEEE Journal of Microwaves}, vol.~1, no.~1,
  pp. 101--122, 2021.

\bibitem{ahmed2019overhead}
R.~Ahmed, F.~Tosato, and M.~Maso, ``Overhead reduction of {NR} type {II} {CSI}
  for {NR} release 16,'' in \emph{23rd International ITG Workshop on Smart
  Antennas}, 2019, pp. 1--5.

\bibitem{3gpp38211}
{3GPP TS 38.211}, ``{NR}; {P}hysical channels and modulation,'' 3rd Generation
  Partnership Project; Technical Specification Group Radio Access Network.

\bibitem{3gpp38213}
{3GPP TS 38.213}, ``{NR}; {P}hysical layer procedures for control,'' 3rd
  Generation Partnership Project; Technical Specification Group Radio Access
  Network.

\bibitem{3gpp38214}
{3GPP TS 38.214}, ``{NR}; {P}hysical layer procedures for data,'' 3rd
  Generation Partnership Project; Technical Specification Group Radio Access
  Network.

\bibitem{madhow1994mmse}
U.~Madhow and M.~L. Honig, ``{MMSE} interference suppression for
  direct-sequence spread-spectrum {CDMA},'' \emph{IEEE Transactions on
  Communications}, vol.~42, no.~12, pp. 3178--3188, Dec. 1994.

\bibitem{wiesel2005semidefinite}
A.~Wiesel, Y.~C. Eldar, and S.~Shamai, ``Semidefinite relaxation for detection
  of 16-{QAM} signaling in {MIMO} channels,'' \emph{IEEE Signal Processing
  Letters}, vol.~12, no.~9, pp. 653--656, Aug. 2005.

\bibitem{hotelling1936relations}
H.~Hotelling, ``Relations between two sets of variates,'' \emph{Biometrika},
  vol.~28, no. 3/4, pp. 321--377, 1936.

\bibitem{wu1994music}
Q.~Wu and K.~M. Wong, ``Un-music and un-cle: An application of generalized
  correlation analysis to the estimation of the direction of arrival of signals
  in unknown correlated noise,'' \emph{IEEE Transactions on Signal Processing},
  vol.~42, no.~9, pp. 2331--2343, Sep. 1994.

\bibitem{dogandzic2002finite}
A.~Dogandzic and A.~Nehorai, ``Finite-length {MIMO} equalization using
  canonical correlation analysis,'' \emph{IEEE Transactions on Signal
  Processing}, vol.~50, no.~4, pp. 984--989, Aug. 2002.

\bibitem{SalahUnderlay22}
M.~S. Ibrahim, P.~A. Karakasis, and N.~D. Sidiropoulos, ``A simple and
  practical underlay scheme for short-range secondary communication,''
  \emph{IEEE Transactions on Wireless Communications}, June 2022.

\bibitem{bertrand2015distributed}
A.~Bertrand and M.~Moonen, ``Distributed canonical correlation analysis in
  wireless sensor networks with application to distributed blind source
  separation,'' \emph{IEEE Transactions on Signal Processing}, vol.~63, no.~18,
  pp. 4800--4813, June 2015.

\bibitem{salahtwc}
M.~S. {Ibrahim} and N.~D. {Sidiropoulos}, ``Reliable detection of unknown
  cell-edge users via canonical correlation analysis,'' \emph{IEEE Transactions
  on Wireless Communications}, vol.~19, no.~6, pp. 4170--4182, Mar. 2020.

\bibitem{hardoon2004canonical}
D.~R. Hardoon, S.~Szedmak, and J.~Shawe-Taylor, ``Canonical correlation
  analysis: An overview with application to learning methods,'' \emph{Neural
  Computation}, vol.~16, no.~12, pp. 2639--2664, 2004.

\bibitem{3gpp38212}
{3GPP TS 38.212}, ``{NR}; {M}ultiplexing and channel coding,'' 3rd Generation
  Partnership Project; Technical Specification Group Radio Access Network.

\end{thebibliography}
\end{document}